\newif\ifarxiv
\arxivtrue          

\ifarxiv
  \documentclass[aps,prd,reprint,amsmath,amssymb,superscriptaddress,nofootinbib,floatfix]{revtex4-2}
\else
  \documentclass[pdflatex,sn-mathphys-num]{sn-jnl}
\fi

\usepackage{amsmath,amssymb,amsfonts,mathtools}
\ifarxiv
  \usepackage{amsthm}
\fi
\usepackage{bm}
\usepackage{booktabs}
\usepackage{microtype}

\ifarxiv
  \providecommand{\botrule}{\bottomrule}
\fi

\DeclareMathOperator{\Tr}{Tr}

\ifarxiv
  \theoremstyle{plain}
  \newtheorem{theorem}{Theorem}
  \newtheorem{proposition}{Proposition}
  \newtheorem{corollary}{Corollary}
  \newtheorem{axiom}{Axiom}
  \newtheorem{principle}{Principle}

  \theoremstyle{remark}
  \newtheorem{remark}{Remark}

  \theoremstyle{definition}
  \newtheorem{definition}{Definition}
\else
  \theoremstyle{thmstyleone}
  \newtheorem{theorem}{Theorem}

  \newtheorem{axiom}{Axiom}

  \theoremstyle{thmstyletwo}
  \newtheorem{remark}{Remark}

  \theoremstyle{thmstylethree}
  \newtheorem{definition}{Definition}
\fi

\begin{document}

\ifarxiv
  \title{The Physics of Unresolved Uncertainty:\\
  Quantum Mechanics as a Theory of Potentiality}
  \author{Lionel Martellini}
  \email{lionel.martellini@edhec.edu}
  \affiliation{EDHEC Quantum Institute, EDHEC Business School, Nice, France}

  \begin{abstract}
We propose a reformulation of quantum mechanics as a theory of unresolved uncertainty. This theory of potentiality is formulated in the language of complex-valued measure theory, regarded as a pre-probabilistic counterpart of ordinary probability theory. In this formulation, additivity, conditioning, independence, mixtures, transition kernels, and temporal divisibility retain natural linear forms at the potentiality level, while non-classical probability-level features such as interference arise from the nonlinear Born map. Measurement is described as Bayesian-type conditioning of potentialities on actualized information, and non-selective measurement as the replacement of coherent potentiality by statistical mixtures of conditional potentiality branches. Mixed states, decoherence, composite systems, entanglement, and Bell-type correlations are also given a unified potentiality-level interpretation. The density matrix is interpreted as a coherence kernel whose off-diagonal blocks encode retained phase relations. For pure bipartite states, potentiality independence is shown to be equivalent to factorization of the Born distribution in every pair of local contexts. The resulting formulation is empirically equivalent to standard quantum mechanics, but it makes explicit a pre-probabilistic description of physical reality that is usually implicit in the Hilbert-space formalism.
  \end{abstract}

  \keywords{foundations of quantum mechanics, quantum potentiality, Born rule, complex-valued measures, quantum measurement, entanglement}
\else
  \title[Quantum Mechanics as a Theory of Potentiality]{The Physics of Unresolved Uncertainty: Quantum Mechanics as a Theory of Potentiality}

  \author[1]{\fnm{Lionel} \sur{Martellini}}\email{lionel.martellini@edhec.edu}

  \affil[1]{\orgdiv{EDHEC Quantum Institute},
  \orgname{EDHEC Business School},
  \orgaddress{\city{Nice}, \country{France}}}

  \abstract{
We propose a reformulation of quantum mechanics as a theory of unresolved uncertainty. This theory of potentiality is formulated in the language of complex-valued measure theory, regarded as a pre-probabilistic counterpart of ordinary probability theory. In this formulation, additivity, conditioning, independence, mixtures, transition kernels, and temporal divisibility retain natural linear forms at the potentiality level, while non-classical probability-level features such as interference arise from the nonlinear Born map. Measurement is described as Bayesian-type conditioning of potentialities on actualized information, and non-selective measurement as the replacement of coherent potentiality by statistical mixtures of conditional potentiality branches. Mixed states, decoherence, composite systems, entanglement, and Bell-type correlations are also given a unified potentiality-level interpretation. The density matrix is interpreted as a coherence kernel whose off-diagonal blocks encode retained phase relations. For pure bipartite states, potentiality independence is shown to be equivalent to factorization of the Born distribution in every pair of local contexts. The resulting formulation is empirically equivalent to standard quantum mechanics, but it makes explicit a pre-probabilistic description of physical reality that is usually implicit in the Hilbert-space formalism.
  }

  \keywords{foundations of quantum mechanics, quantum potentiality, Born rule, complex-valued measures, quantum measurement, entanglement}
\fi

\maketitle

\ifarxiv
\else
  \clearpage
\fi

\section{Introduction}

While the Hilbert-space formulation of quantum mechanics is empirically successful and mathematically elegant, its abstraction tends to obscure what the theory is meant to describe. In particular, foundational questions remain concerning the physical status of the state vector and the special role assigned to measurement in the standard axioms. Moreover, many key aspects of the quantum formalism, such as superposition and entanglement, remain counterintuitive. 

The perspective adopted in this paper is that quantum mechanics becomes conceptually clearer when it is formulated as a theory of \emph{unresolved uncertainty}. More precisely, we argue that the distinctive structure of quantum theory lies not merely in its capacity to assign probabilities to actualized measurement outcomes, but in its representation of a deeper pre-probabilistic level of physical description, namely the structure of potentialities before uncertainty has been resolved. The aim of the present paper is to make this perspective mathematically precise by reformulating the quantum formalism at the pre-probabilistic level, in terms of what we call a \textit{theory of potentiality}. This distinction is already latent in the standard formalism, where the basic object is not a probability distribution over actualized outcomes, but a complex-valued object whose squared modulus yields observable probabilities after application of the Born rule. The central proposal of this paper is to make this distinction explicit by reformulating quantum mechanics in terms of normalized complex-valued potentiality distributions. 

Mathematically, the natural language for such a reformulation is complex-valued measure theory. Classical probability represents uncertainty over alternatives already organized as mutually exclusive elements of a Boolean event structure. Quantum amplitudes, on the other hand, describe coherent alternatives whose separation into such an event structure is  not given ex-ante before a record is formed, and depends on the measurement context. Interference arises when unresolved amplitudes are combined before a context-specific Born distribution is formed. More generally, we argue that the quantum formalism becomes particularly transparent when familiar probabilistic notions are transposed to the pre-probabilistic level. Distributions, marginals, conditioning, independence, mixtures, correlations, transition kernels, and the description of composite systems all admit potentiality-level counterparts. Measurement can then be described as the conditioning of potentialities on actualized information. In the non-selective case, where a measurement has occurred but the outcome is not retained, coherent potentiality is replaced by a statistical mixture of conditional potentiality branches, giving a natural account of the transition from pure to mixed states and of the suppression of interference under decoherence.

The present work is related to a long-standing line of thought, going
back at least to Heisenberg, according to which the quantum state does
not describe actual properties of the physical world, but rather a structure of
\emph{potentialities}, \emph{tendencies}, or \emph{propensities} for
actualization. Heisenberg~\cite{Heisenberg1958} expressed this point explicitly when he wrote
that ``the probability wave [...] was a quantitative version of the old
concept of \textit{potentia} in Aristotelian philosophy'' and that it
introduced ``something standing in the middle between the idea of an
event and the actual event''. The present paper can be regarded as providing a mathematical analysis of this reading of the quantum formalism. More recent developments include Del Santo and Gisin's programme of potentiality realism~\cite{DelSantoGisin2023}, de Ronde's analysis of immanent powers~\cite{deRonde2017}, Kastner, Kauffman, and Epperson's dual ontology of \emph{res potentia} and \emph{res extensa}~\cite{Kastner2018}, and Aerts' work on potentiality states in quantum and cognitive structures~\cite{Aerts2009}. These approaches differ substantially, but they all share the idea that the quantum state encodes a deeper description of reality from which actual events eventually emerge. The contribution of the present paper is to move this potentiality reading from the interpretive level to the mathematical level. In most existing discussions, the Hilbert-space formalism is taken as given and potentiality is added as an ontological interpretation of that formalism. Here, by contrast, we begin with complex-valued potentiality measures and show that the usual structures of quantum mechanics admit a systematic reformulation at this pre-probabilistic level.

Several existing mathematical frameworks are close in spirit to this program. Quasi-probability distributions, such as the Wigner, Husimi, and Glauber--Sudarshan representations, express quantum states in terms of distributions that need not exhibit ordinary positive probabilities~\cite{Wigner1932,Husimi1940,Glauber1963,Sudarshan1963}. Feynman's path integral formulation is also closely related, since it assigns complex amplitudes to histories and sums them before applying a quadratic probability rule~\cite{Feynman1948}. On the measurement side, Bayesian readings of state update have been emphasized in QBism~\cite{Fuchs2014}, while the L\"uders rule provides the standard Hilbert-space expression of conditional state updating after measurement~\cite{Luders1951}. The present framework differs from these perspectives in that it treats complex-valued potentiality distributions as the primary mathematical objects and ordinary probabilities as their emerging Born-level actualized counterparts. Even more closely related is Sorkin's
quantum measure theory~\cite{Sorkin1994,Sorkin1997}, which organizes
generalized probability theories into a hierarchy of interference sum
rules: classical probability is the theory in which the second-order
interference functional vanishes, and quantum mechanics is the next
level, in which second-order interference is permitted but the
third-order functional \(I_3\) vanishes identically, a prediction
tested in the triple-slit experiment of Sinha et
al.~\cite{Sinha2010}. Sorkin's basic object is, however, a
\emph{real-valued} measure on sets of histories, and the failure of
additivity is taken as primitive, with the rule \(I_3=0\) being treated as an axiom characterizing the quantum measure. In the present formulation, the primitive object is instead a
\emph{complex-valued} measure that is fully Kolmogorov-additive, the
Born map is kept separate as the interface between potentiality and
probability, and the vanishing of third-order interference is then
\emph{derived} from the additivity
preserved at the potentiality level composed with the quadratic Born
map (see the Remark in Section~\ref{sec:potentiality_theory}). In the same spirit, Youssef's complex probability
theory~\cite{Youssef1991,Youssef1994,Youssef1995} also takes
complex-valued, additively composing measures as basic, motivated by
a Cox-style consistency analysis of Bayesian inference. This framework
is close to ours from the mathematical standpoint, but there is an
important difference in interpretation. Youssef reads the complex assignments as exotic
\emph{probabilities} of outcomes that are actual. For example, in the
two-slit experiment, the particle is taken to pass through one slit or
the other, with ordinary probability theory failing for these actual
trajectories. The theory of potentiality presented in this paper, on the other hand, is not a
probability theory of actual events with different axioms; it is a
pre-probabilistic description of alternatives that are \emph{not yet
actual}. In other words, the mathematical and conceptual framework we develop has no counterpart in the
complex-probability formalism, because in that formalism the distinction
between unresolved and resolved uncertainty does not exist. A further closely related framework is Belavkin's Eventum Mechanics and his quantum stochastic theory of continuous measurement and filtering~\cite{Belavkin1990Posterior,Belavkin2002,Belavkin2003Eventum,Belavkin2007Eventum}. In this approach, the measurement problem is addressed by distinguishing the algebra of quantum dynamical observables from the algebra of actual classical observables, or beables, and by imposing a nondemolition causality principle according to which recorded past events must be compatible with the potential quantum future. One key difference with our approach is that Eventum Mechanics develops this distinction in the quantum-stochastic language of continuous measurement, trajectories, and filtering, whereas the present paper formulates it in terms of complex-valued potentiality measures. One last related framework is the
``Events-Trees-Histories (ETH) approach'' to quantum mechanics developed by Fr\"ohlich and
collaborators~\cite{BlanchardFroehlichSchubnel2016,Froehlich2019,
FroehlichGangPizzo2025}. The ETH
approach is framed in the Heisenberg picture with a decreasing net of
observable algebras, and yields a non-linear, non-unitary stochastic law
for state evolution. In contrast, the present formulation works at the level of
complex-valued potentiality measures with linear, norm-preserving
transport, the non-linearity being induced by the Born map. A key difference is that the ETH approach modifies the dynamical law and is therefore not empirically equivalent to standard quantum mechanics, whereas the present reformulation is. 

A potential objection to the present proposal is that it may amount to a mere reinterpretation of the usual wave function, with the complex amplitude $\psi$ being simply renamed a potentiality distribution, while the mathematical content of the theory remains unchanged. While we acknowledge that our effort is to make explicit a level of interpretation that we believe has always been implicit in the quantum formalism, this objection ultimately touches upon an important distinction to be made between a reinterpretation and a reformulation of quantum mechanics. A reinterpretation leaves the formal structure of the theory untouched and attaches a new meaning to objects taken as primitive. A reformulation, by contrast, changes which notions are primitive and which are derived, even when the resulting empirical content is the same. In light of this distinction, the aim of the present paper is not merely to interpret the Hilbert-space state as a state of potentiality, but to reorganize the formalism around potentiality-level notions. As indicated in Table~\ref{tab:dictionary} from Section~\ref{sec:implications}, the translation between the potentiality language and the
Hilbert-space language is, in fact, asymmetric because the standard formalism has no names or rules for some of the pre-probabilistic operations, such as potentiality-level conditioning as distinct from L\"uders projection, or 
potentiality independence as strictly stronger than probabilistic independence.

The rest of the paper is organized as follows. Section~\ref{sec:potentiality_theory} introduces the mathematical theory of potentiality based on complex-valued measures and develops potentiality-level analogues of distributions, additivity, conditioning, independence, mixtures, and transition kernels. Section~\ref{sec:qm_as_potentiality} reformulates the basic axioms of quantum mechanics in potentiality language, including dynamics (in particular its intrinsic, clock-free
form), measurement, and contextual transformations. Section~\ref{sec:extensions} extends the formalism to mixed states and composite systems, and then treats entanglement and decoherence, Bell-type correlations being taken up with the discussion of realism in Section~\ref{sec:implications}. The conclusion summarizes the pedagogical and ontological significance of the formulation and discusses possible further extensions.

\section{The Theory of Potentiality as a Complex-Valued Measure Theory} \label{sec:potentiality_theory} 

This section introduces the mathematical framework used below for a \emph{theory of potentiality}. Both probability theory and potentiality theory provide ex ante descriptions of a physical system, before the outcome of an experiment is known. Even for a fixed measurement context, however, they describe different structures. A probability distribution assigns Born-rule weights to the context-specific outcomes and thereby specifies their expected frequencies once those outcomes are resolved, treating them as mutually exclusive elements of a Boolean event space. A potentiality state, by contrast, describes the coherent relations among the corresponding unresolved alternatives, including the information that determines how they may combine and interfere before an outcome is actualized. 

Mathematically, the appropriate framework is that of complex-valued measures, which extend ordinary measure theory from positive real measures to countably additive complex measures \cite{Halmos1950,Rudin1987,Yosida1980}. Our main purpose in this section is to show that the standard tools familiar from probability theory, such as distributions, conditioning, independence, or transition kernels, all admit natural counterparts at the level of potentialities. These counterparts preserve the linear and additive structure of the classical theory, while preparing the ground for the representation of quantum mechanics developed later. 

For clarity, we cast the analysis in settings where the potentiality spaces are assumed to be discrete and finite. Extensions to countably infinite spaces require additional summability assumptions, while extensions to continuous state spaces require an assumption that the relevant complex-valued potentiality measures admit suitable densities with respect to a reference measure and that the corresponding integrability and normalization conditions are satisfied. 

\subsection{Probability theory and potentiality theory} \label{subsec:prob_vs_pot}

Let $(\Omega,\mathcal A)$ be a measurable space, with $\Omega$ finite
and $\mathcal A$ a sigma-algebra of events. In ordinary probability theory, an uncertain system is described by a
probability measure
\[
\mathbb P:\mathcal A\to[0,1],
\]
satisfying positivity, sigma-additivity, and normalization $\mathbb P(\Omega)=1$.
Equivalently, in the discrete setting, one may work with the density
on atoms,
\[
p(\omega):=\mathbb P(\{\omega\}),
\qquad
\sum_{\omega\in\Omega}p(\omega)=1,
\]
so that, for any event $A\in\mathcal A$, the measure is recovered by
additive aggregation:
\[
\mathbb P(A)=\sum_{\omega\in A}p(\omega).
\]
If $X:\Omega\to\mathcal X$ is a discrete random variable with outcome
space $\mathcal X=\{x_1,\dots,x_n\}$, its distribution is the
pushforward
\begin{equation}
p_X(x_i):=\mathbb P\!\bigl(X^{-1}(\{x_i\})\bigr)
=\sum_{\omega:\,X(\omega)=x_i}p(\omega),
\qquad
\sum_{i=1}^{n}p_X(x_i)=1.
\label{eq:prob_distribution_section}
\end{equation}
The distribution $p_X$ characterizes the statistics of $X$ once the
uncertainty has been resolved, on the assumption that the values
$x_i$ exhaust the mutually exclusive ways in which this resolution
can occur.
 
Potentiality theory, on the other hand, provides a description of uncertainty before its resolution. It is built from the same three steps (measure,
density, aggregation), with the key difference that the weights are complex, and their normalization is quadratic. An uncertain system
is thus described by a complex measure
\[
\mathcal P:\mathcal A\to\mathbb C,
\]
satisfying sigma-additivity and the calibration condition
\[
\sum_{\omega\in\Omega}\bigl|\mathcal P(\{\omega\})\bigr|^2=1 ,
\]
imposed not at the atomic level (in general $\mathcal P(\Omega)\neq 1$). Equivalently,
one may work with the density on atoms,
\[
\psi(\omega):=\mathcal P(\{\omega\}),
\qquad
\sum_{\omega\in\Omega}\lvert\psi(\omega)\rvert^2=1,
\]
so that, for any event $A\in\mathcal A$, the measure is recovered by
the same additive aggregation:
\begin{equation}
\mathcal P(A)=\sum_{\omega\in A}\psi(\omega),
\qquad A\in\mathcal A,
\label{eq:complex_measure_density}
\end{equation}
which we refer to as the \emph{coherent aggregate} of the
potentialities associated with $A$. 

The two levels of description are connected by the Born rule: when the uncertainty is
resolved, the probability associated with the resolved elementary alternative $\omega$
is the squared modulus $p(\omega)=\lvert\psi(\omega)\rvert^2$. The elements of
$\Omega$ are the mutually exclusive elementary alternatives of the
model, and resolution occurs at their level; a coarse-graining by a
variable $X:\Omega\to\mathcal X$ therefore merely groups resolved
weights, so that the distribution of $X$ after resolution is the
\emph{incoherent} pushforward
\begin{equation}
p_X(x)=\sum_{\omega:\,X(\omega)=x}\lvert\psi(\omega)\rvert^2 .
\label{eq:incoherent_pushforward}
\end{equation}

The distinctive feature of potentiality theory is the coherent aggregate
(\ref{eq:complex_measure_density}), which sums potentiality densities before
taking squared moduli rather than after, leading to interference phenomena discussed in Section~\ref{subsec:additivity}. The reasons for
adopting this model of uncertainty in the first place are physical
and are deferred to Section~3 and Section~4.

\subsection{Potentiality additivity and interference}
\label{subsec:additivity}

The key feature of classical probability theory is additivity. If \(\{A_n\}_{n\ge 1}\subset\mathcal A\) are pairwise disjoint, then
\begin{equation}
\mathbb P\!\left(\bigcup_{n=1}^{\infty}A_n\right)=\sum_{n=1}^{\infty}\mathbb P(A_n).
\label{eq:prob_additivity}
\end{equation}
This is the third Kolmogorov axiom. 

At the potentiality level, the additive structure remains explicit:
\begin{equation}
\mathcal P\!\left(\bigcup_{n=1}^{\infty}A_n\right)=\sum_{n=1}^{\infty}\mathcal P(A_n)
\label{eq:potentiality_additivity}
\end{equation}
for every pairwise disjoint family \(\{A_n\}_{n\ge 1}\subset\mathcal A\); in the finite setting considered here, only finitely many of these sets can be nonempty. The linear structure is thus preserved at the level of complex potentialities. If unresolved alternatives contribute indistinguishably to one final outcome, the subsequent quadratic Born map generates the interference terms. If the corresponding alternatives are instead resolved into distinguishable records, the interference term is suppressed and their Born probabilities add. Interference is therefore not a failure of Kolmogorov additivity on the event algebra of an implemented measurement context. It reflects the fact that the unresolved routes have not become disjoint recorded events before their potentialities are combined.

\begin{remark}
The quadratic Born map generates second-order interference but no
irreducible third-order interference. Let \(A,B,C\) be pairwise
disjoint and define
\[
\mu(D):=\lvert \mathcal{P}(D)\rvert^{2}.
\]
Writing \(a=\mathcal{P}(A)\), \(b=\mathcal{P}(B)\), and
\(c=\mathcal{P}(C)\), the additivity of \(\mathcal{P}\) gives
\[
\mu(A\cup B\cup C)=\lvert a+b+c\rvert^{2},
\]
whose expansion contains only individual terms and pairwise cross
terms. Hence Sorkin's third-order interference functional
\[
\begin{aligned}
I_{3}(A,B,C):={}&
\mu(A\cup B\cup C)
-\mu(A\cup B)
-\mu(A\cup C)
-\mu(B\cup C) \\
&+\mu(A)+\mu(B)+\mu(C)
\end{aligned}
\]
vanishes identically. Thus, \(\mu\) has the algebraic structure of an
unnormalized grade-2-additive functional of the type appearing in
Sorkin's quantum measure theory~\cite{Sorkin1994}.
In the present construction, the grade-2 sum rule is not imposed
independently but follows from the additivity of \(\mathcal{P}\) and
the quadratic form of the Born map.
\end{remark}

Whether unresolved alternatives compose coherently, so that their potentialities add
before the Born map, or incoherently, so that their probabilities add, depends on whether any
physical record has retained information distinguishing them. This condition is most naturally expressed once the reduced mixed-state description
is available; we return to it in Section~\ref{sec:mixtures}, where the density matrix is read as
a coherence kernel and the passage from coherent to incoherent composition is identified
with the disappearance of the relevant off-diagonal blocks.

\subsection{Joint, marginal, and conditional distributions}
\label{subsec:joint_marginal}

Let \(X:\Omega\to\mathcal X\) and \(Y:\Omega\to\mathcal Y\) be two
variables defined on the \textit{same} underlying potentiality space
\((\Omega,\mathcal A,\mathcal P)\). We shall call such variables
\emph{compatible}. Throughout this subsection we assume in addition that
the pair \((X,Y)\) is \emph{jointly separating}, that is, injective on
\(\Omega\), so that the joint distribution
\eqref{eq:joint_potentiality} below is a relabeling of the underlying
state. The joint potentiality
distribution is defined by
\begin{equation}
\psi(x,y)
:=
\sum_{\omega:\,X(\omega)=x,\;Y(\omega)=y}\psi(\omega).
\label{eq:joint_potentiality}
\end{equation}
The corresponding marginal potentialities are
\begin{equation}
\psi_X(x)=\sum_y \psi(x,y),
\qquad
\psi_Y(y)=\sum_x \psi(x,y).
\label{eq:marginals_potentiality}
\end{equation}
Whenever \(\psi_X(x)\neq 0\), one may define the algebraic conditional
potentiality of \(Y\) given \(X=x\) by
\begin{equation}
\psi(y\mid x)
=
\frac{\psi(x,y)}{\psi_X(x)}.
\label{eq:conditional_potentiality}
\end{equation}
It then follows immediately that
\begin{equation}
\psi_Y(y)
=
\sum_x \psi(y\mid x)\,\psi_X(x).
\label{eq:law_total_potentiality}
\end{equation}
This is the potentiality-level analogue of the classical law of total
probability. The formal structure is the same, but the objects are now
complex-valued and retain phase information that is lost after applying
the Born map.

Note that the marginal amplitudes in
\eqref{eq:marginals_potentiality} are coherent aggregates so
\[
\lvert\psi_X(x)\rvert^2
=
\left\lvert\sum_y\psi(x,y)\right\rvert^2
\]
is not, in general, the probability of the outcome \(X=x\). Under the joint-separation assumption, when \(X\) and \(Y\) are resolved simultaneously within a single context, the probability of \(X=x\), irrespective of the value of \(Y\), is obtained by incoherent aggregation over \(y\): \begin{equation} p_X(x) = \sum_y \lvert \psi(x,y) \rvert^2. \label{eq:incoherent_marginal} \end{equation}

These considerations lead to a natural amplitude-level notion of
independence, which we formulate directly in terms of factorization.

\begin{definition}[Potentiality independence]
Compatible variables \(X\) and \(Y\) are \emph{potentiality independent}
if the joint potentiality distribution factorizes,
\begin{equation}
\psi(x,y)=\alpha(x)\,\beta(y)
\label{eq:indep_pot_factor}
\end{equation}
for some Born-normalized \(\alpha\) and \(\beta\); equivalently, if the
matrix \([\psi(x,y)]\) has rank one.
\end{definition}

This notion captures the fact that resolving one variable teaches nothing about
the potentiality of the other, which is the amplitude-level counterpart of the
familiar probabilistic intuition. Potentiality independence implies
probabilistic independence after the Born map,
\begin{equation}
\psi(x,y)=\alpha(x)\beta(y)
\Longrightarrow
|\psi(x,y)|^2
=
|\alpha(x)|^2\,|\beta(y)|^2,
\label{eq:pot_implies_prob_indep}
\end{equation}
where, by \eqref{eq:incoherent_marginal}, \(|\alpha(x)|^2\) and
\(|\beta(y)|^2\) are precisely the Born marginals of \(X\) and \(Y\). On
the other hand, the converse need not hold, and two variables may be
independent at the level of Born probabilities while
remaining dependent at the level of potentialities, through relative
phases or coherent amplitude relations not captured in \(|\psi|^2\) (see Theorem~\ref{new:thm:invariance} below for more details).

\subsection{Conditioning and Bayesian updating}
\label{subsec:conditioning}

The potentiality formulation distinguishes two notions of conditioning, a distinction that has no counterpart in ordinary probability theory: algebraic conditioning and
physical measurement conditioning. In classical probability theory,
observing the event \(X=x\) amounts to restricting the prior probability
state to the subset \(\{\omega:X(\omega)=x\}\) and renormalizing. If the
prior admits a density \(p:\Omega\to[0,1]\), the posterior density is
\begin{equation}
p_{|X=x}(\omega)
=
\frac{\mathbf 1_{\{X(\omega)=x\}}p(\omega)}
{p_X(x)},
\qquad
p_X(x)=\sum_{\omega:\,X(\omega)=x}p(\omega).
\label{eq:posterior_probability_density}
\end{equation}
Conditional distributions of other variables are then obtained by
marginalization.

At the potentiality level, physical measurement conditioning has the
same structural form, but the normalization is dictated by the Born
rule. Let \(\psi:\Omega\to\mathbb C\) be a Born-normalized potentiality
distribution. If the event \(X=x\) is actualized and recorded, the
posterior potentiality state is
\begin{equation}
\psi_{|X=x}(\omega)
=
\frac{\mathbf 1_{\{X(\omega)=x\}}\psi(\omega)}
{\sqrt{p_X(x)}},
\qquad
p_X(x)
=
\sum_{\omega:\,X(\omega)=x}
|\psi(\omega)|^2.
\label{eq:posterior_potentiality_density_born_normalized}
\end{equation}
This normalization ensures that
\begin{equation}
\sum_{\omega\in\Omega}
|\psi_{|X=x}(\omega)|^2
=
1.
\label{eq:posterior_potentiality_normalized}
\end{equation}
Thus the denominator in the physical update is not the complex marginal
amplitude \(\psi_X(x)\), but the square root of the Born probability of
the actualized event.

This physical conditioning rule should be distinguished from a purely
algebraic amplitude factorization. If \(X\) and \(Y\) are compatible
variables with joint potentiality distribution \(\psi(x,y)\), then,
whenever \(\psi_X(x)\neq0\), one may define
\begin{equation}
\psi_{\rm alg}(y\mid x)
=
\frac{\psi(x,y)}{\psi_X(x)}.
\label{eq:algebraic_conditional_potentiality}
\end{equation}
This gives the exact decomposition
\begin{equation}
\psi(x,y)
=
\psi_{\rm alg}(y\mid x)\psi_X(x),
\label{eq:algebraic_factorization_potentiality}
\end{equation}
and, similarly, an algebraic Bayes formula
\begin{equation}
\psi_{\rm alg}(y\mid x)
=
\frac{
\psi_{\rm alg}(x\mid y)\psi_Y(y)
}
{\psi_X(x)}
\label{eq:algebraic_bayes_potentiality}
\end{equation}
whenever the relevant marginal amplitudes are nonzero.

Physical conditioning describes the update associated with an
actualized and recorded event. By contrast, algebraic conditioning, which preserves an exact multiplicative decomposition of joint potentialities, should be understood not as a physical measurement update, but as an amplitude-level decomposition of unresolved alternatives. It is the appropriate operation when an intermediate variable has not been actualized, since its branches must remain available for coherent recombination rather than being converted into a statistical mixture. 

In the standard Hilbert-space formalism, physical conditioning on the
recorded outcome of an ideal projective measurement is described by the
L\"uders rule~\cite{Luders1951}. If the measurement is performed but no
conditioning on its outcome is retained, the corresponding nonselective
update produces a mixture over the recorded alternatives. An
intermediate variable that has not been measured or otherwise recorded
is treated differently: no state update is applied, and the joint
amplitude is retained and propagated coherently.

The potentiality formulation makes this latter case explicit through
algebraic conditioning, which provides an amplitude-level decomposition
of unresolved alternatives while remaining conceptually distinct from a
physical measurement update. In the Hilbert-space formalism, the same
situation is handled only implicitly by leaving the joint state intact,
rather than through a separate conditioning rule. Making this operation
explicit helps remove an ambiguity that can otherwise remain hidden
between conditioning on a recorded outcome and decomposing an
unrecorded intermediate variable.

\subsection{Transition kernels}
\label{subsec:kernels}

A final tool from probability theory that naturally extends to the theory of potentiality is the notion of transition kernel. In the classical case, a transition kernel from \(\Omega_A\) to \(\Omega_B\) is a family of conditional probabilities
\begin{equation}
K(\omega_B\mid \omega_A)\ge 0,
\qquad
\sum_{\omega_B\in\Omega_B}K(\omega_B\mid \omega_A)=1,
\label{eq:classical_kernel}
\end{equation}
which transports a probability distribution \(p_A\) on \(\Omega_A\) into a distribution \(p_B\) on \(\Omega_B\) via
\begin{equation}
p_B(\omega_B)=\sum_{\omega_A\in\Omega_A}K(\omega_B\mid \omega_A)\,p_A(\omega_A).
\label{eq:classical_kernel_transport}
\end{equation}
This describes either ordinary conditional evolution on a common space or transport between two distinct spaces of description.

The potentiality-theoretic analogue is a \emph{complex transition kernel}
\begin{equation}
L(\omega_B\mid \omega_A)\in\mathbb C,
\label{eq:potentiality_kernel}
\end{equation}
which transports a potentiality distribution \(\psi_A\) on \(\Omega_A\) into a potentiality distribution \(\psi_B\) on \(\Omega_B\) by the linear rule
\begin{equation}
\psi_B(\omega_B)
=
\sum_{\omega_A\in\Omega_A}
L(\omega_B\mid \omega_A)\,\psi_A(\omega_A).
\label{eq:potentiality_kernel_transport}
\end{equation}
A complex kernel is called \emph{isometric} when
\begin{equation}
\sum_{\omega_B\in\Omega_B}
L(\omega_B\mid \omega_A)\,
\overline{L(\omega_B\mid \omega_A')}
=
\delta_{\omega_A,\omega_A'}.
\label{eq:kernel_isometry}
\end{equation}
Isometric kernels preserve the Born norm: if \(\psi_A\) is
Born-normalized, so is the transported state \(\psi_B\). Isometry is
thus the potentiality-level counterpart of the stochasticity condition
in \eqref{eq:classical_kernel}.

As in the classical case, this notion has two distinct interpretations. First, if two compatible variables \(X\) and \(Y\) are defined on the same underlying potentiality space, then the conditional amplitudes
\begin{equation}
L(y\mid x)=\psi(y\mid x)
\label{eq:kernel_same_space}
\end{equation}
define a transition kernel through the law of total potentiality
\[
\psi_Y(y)=\sum_x L(y\mid x)\,\psi_X(x).
\]

Second, and more generally, a complex kernel may relate two distinct spaces of potentiality description, even when they are not derived from variables defined on a single common underlying space. This more general interpretation is important in the quantum setting, where incompatible observables are naturally associated with different contextual spaces and are related by norm-preserving complex kernels rather than ordinary conditional probabilities. 

\subsection{Temporal divisibility and Born-induced indivisibility}
\label{subsec:temporal_divisibility}

The distinction between probability and potentiality also clarifies the
temporal structure of quantum dynamics. In classical stochastic
processes, Markovianity is expressed through the Chapman-Kolmogorov
composition law. If \(P(t,s)\) denotes the transition probability matrix
from time \(s\) to time \(t\), then for \(t_0<t_1<t_2\),
\begin{equation}
P(t_2,t_0)
=
P(t_2,t_1)P(t_1,t_0).
\label{eq:classical_markov_divisibility}
\end{equation}
Equivalently, in components,
\begin{equation}
P_{ki}(t_2,t_0)
=
\sum_j
P_{kj}(t_2,t_1)P_{ji}(t_1,t_0).
\label{eq:classical_markov_divisibility_components}
\end{equation}

At the level of potentialities, the corresponding transition object is
not a positive stochastic matrix but a complex transition kernel. For a
closed quantum system, let \(L_{ji}(t,s)\) denote the complex transition
potentiality from alternative \(i\) at time \(s\) to alternative \(j\)
at time \(t\). Since the unitary propagator satisfies
\begin{equation}
U(t_2,t_0)
=
U(t_2,t_1)U(t_1,t_0),
\label{eq:unitary_composition_law}
\end{equation}
the transition potentiality kernel is divisible:
\begin{equation}
L(t_2,t_0)
=
L(t_2,t_1)L(t_1,t_0).
\label{eq:potentiality_divisibility}
\end{equation}
Equivalently, in components,
\begin{equation}
L_{ki}(t_2,t_0)
=
\sum_j
L_{kj}(t_2,t_1)L_{ji}(t_1,t_0).
\label{eq:potentiality_divisibility_components}
\end{equation}

The associated Born transition probabilities are defined by
\begin{equation}
P_{ji}(t,s)
=
|L_{ji}(t,s)|^2.
\label{eq:born_transition_probabilities}
\end{equation}
In general, these probabilities do not satisfy the classical
Chapman-Kolmogorov law. Indeed, using
\eqref{eq:potentiality_divisibility_components}, one obtains
\begin{equation}
P_{ki}(t_2,t_0)
=
\left|
\sum_j
L_{kj}(t_2,t_1)L_{ji}(t_1,t_0)
\right|^2,
\label{eq:born_transition_full}
\end{equation}
whereas classical probabilistic divisibility would require
\begin{equation}
P_{ki}(t_2,t_0)
=
\sum_j
P_{kj}(t_2,t_1)P_{ji}(t_1,t_0).
\label{eq:classical_transition_required}
\end{equation}
The difference between these two expressions is precisely the
interference between unresolved intermediate alternatives. Expanding
the squared modulus gives
\begin{equation}
P_{ki}(t_2,t_0)
=
\sum_j
P_{kj}(t_2,t_1)P_{ji}(t_1,t_0)
+
I_{ki}(t_2,t_1,t_0),
\label{eq:born_induced_indivisibility}
\end{equation}
where
\begin{equation}
\begin{aligned}
I_{ki}(t_2,t_1,t_0)
={}&
2\operatorname{Re}\sum_{j<j'}
L_{kj}(t_2,t_1)L_{ji}(t_1,t_0)
\\
&\times
\overline{
L_{kj'}(t_2,t_1)L_{j'i}(t_1,t_0)
}.
\end{aligned}
\label{eq:temporal_interference_term}
\end{equation}

Thus, the failure of probabilistic divisibility is not a failure of
composition at the potentiality level, but a byproduct of the interference between
unresolved intermediate potentiality paths. The quantum process may
therefore appear indivisible at the probability level, even though it
remains divisible at the potentiality level. In other words, the failure of 
classical divisibility reflects the fact that the
intermediate alternatives at time \(t_1\) have not been actualized or
recorded, so their potentialities must be summed before the Born
rule is applied.

This perspective is related to, but distinct from approaches in which
indivisibility is taken as primitive in the stochastic description of
quantum systems \cite{Barandes2023,Barandes2025}. In the present
framework, standard closed-system dynamics remains divisible at the
complex potentiality level. Indivisibility appears only at the emerging probabilistic level of description after applying
the Born map. 

\section{Quantum Mechanics as a Physical Theory of Potentiality}
\label{sec:qm_as_potentiality}

The previous section introduced a mathematical theory of potentiality built from complex-valued measures and showed that many of the central structural tools of the classical theory such as additivity, marginals, conditioning, independence, Bayesian updating, and transition kernels, all survive in this more general setting, but now at the level of complex amplitudes rather than positive probabilities. In this sense, potentiality theory is a strict extension of the probabilistic framework. It provides a mathematically coherent language for describing a pre-probabilistic level of description of the physical world, from which ordinary probabilities emerge after the application of a Born-type rule. 

By itself, however, this framework is not yet a physical theory. In order to obtain a physical theory, one must introduce specific axioms specifying how potentiality states are to be interpreted physically, how observables are represented, how the state evolves, and how measurement outcomes are extracted. The purpose of the present section is to express the standard axioms in the language of potentiality theory. This allows us to show that the quantum formalism need not appear as a mysterious departure from classical probabilistic logic, but as a theory that precisely maintains this basic logic and linear structure but at the more fundamental pre-resolved potentiality level, from which probabilities emerge as a post-resolved description.

For reference, recall the standard axioms of quantum mechanics in their familiar Hilbert-space form. A state is represented by a unit vector (or a ray) in a Hilbert space, or more generally by a density matrix. Observables are represented by self-adjoint operators. The time evolution of a closed system is unitary and generated by the Schr\"odinger equation. Finally, measurement probabilities are given by the Born rule, together with the usual post-measurement state update. The question addressed here is how these same ingredients can be reformulated in the language of potentialities.

\subsection{Axioms in potentiality language}

We begin with the pure-state case in a fixed context. Physically, the elements $\omega\in\Omega$ are the potential outcomes through which the system may be actualized, prior to any record. To each such outcome the theory assigns a complex potentiality $\psi(\omega)$ rather than a mere probability.

\begin{axiom}[State]
Let \((\Omega,\mathcal A)\) be a finite potentiality space. A pure state of the system is represented by a complex-valued potentiality measure \(\mathcal P\) admitting a density \(\psi:\Omega\to\mathbb C\) normalized so that
\begin{equation}
\sum_{\omega\in\Omega} |\psi(\omega)|^2 = 1.
\label{eq:axiom_state_potentiality}
\end{equation}
Two normalized densities related by a common phase,
\begin{equation}
\psi'(\omega)=e^{i\theta}\psi(\omega),
\end{equation}
represent the same physical pure state. The relative phases among the \(\psi(\omega)\), by
contrast, are physically significant: they leave the Born probabilities of the present
context unchanged, but affect the probabilities obtained after coherent transport or
recombination, or in measurements not diagonal in the present context. 
\end{axiom}

Thus the primitive object of the theory is not a probability distribution over already
actualized outcomes, but a normalized potentiality state, given as the phase-equivalence class of
\(\psi\), in the sense of Eq.~\eqref{eq:complex_measure_density}, defined prior to
actualization.

\begin{axiom}[Physical variables]
Within a fixed context, a \emph{fine-grained} physical variable is an injective map
\begin{equation}
X:\Omega\to\mathcal X=\{x_1,\dots,x_n\},
\label{eq:axiom_variable_potentiality}
\end{equation}
whose potentiality distribution is the relabeled state,
\(\psi_X(x)=\psi(\omega)\) whenever \(X^{-1}(\{x\})=\{\omega\}\) and
\(\psi_X(x)=0\) when no alternative is mapped to \(x\), with observable
probabilities
given by the Born rule
\begin{equation}
p_X(x)=|\psi_X(x)|^2.
\label{eq:born_rule_axiom}
\end{equation}
A \emph{coarse-grained} variable is a map \(X:\Omega\to\mathcal X\) that
groups several alternatives under a single value. Its potentiality-level description is the family of
relative potentiality states
\(\{\mathbf 1_{X^{-1}(\{x\})}\,\psi\}_{x\in\mathcal X}\), and, when \(X\)
is actualized, its observable probabilities are
\begin{equation}
p_X(x)=\sum_{\omega:\,X(\omega)=x}|\psi(\omega)|^2.
\label{eq:coarse_grained_born_axiom}
\end{equation}
\end{axiom}

In this way, a fine-grained variable is described in the same formal
manner as a random variable in probability theory, except that the
induced distribution is complex-valued and therefore richer than the
final probability law. Coherent summation enters the physical axioms through transport: through coherent kernel transport across contexts in
the law of total potentiality \eqref{eq:law_total_potentiality}, and
through the dynamics of Axiom~3, both of which preserve the Born norm by
isometry. 

\begin{axiom}[Dynamics]
For a closed, unobserved system, potentiality evolves according to a
Born-norm-preserving linear dynamics,
\begin{equation}
i\hbar\,\frac{d\psi}{dt}=H\psi,
\label{eq:axiom_dynamics_potentiality}
\end{equation}
where \(H=H^\dagger\).
\end{axiom}

The role of Axiom 3 is to express the deterministic transport of
unresolved potentiality in a closed system. The Schr\"odinger equation
is not interpreted here as the evolution of an ordinary probability
distribution, but as the norm-preserving evolution of a complex
potentiality state. The condition \(H=H^\dagger\) guarantees conservation
of the Born norm,
\begin{equation}
\frac{d}{dt}\sum_{\omega\in\Omega}|\psi(\omega,t)|^2=0.
\end{equation}
Thus closed-system dynamics preserves coherent potentiality without
actualizing any particular outcome. Observable probabilities arise only
when a context is selected and the Born rule is applied. The distinction
between dynamical transport and actualization is thus that dynamics transports unresolved potentiality, whereas measurement
conditions potentiality on an actualized record, as we now discuss.

\begin{axiom}[Complete actualization and selective measurement]
If a measurement of the variable \(X\) produces a fully resolved and
retained outcome \(x\in\mathcal X\), then the prior potentiality state
\(\psi\) is updated by conditioning on the event \(\{X=x\}\). The
post-measurement state is
\begin{equation}
\psi_{\,|X=x}(\omega)
=
\frac{\mathbf 1_{\{X(\omega)=x\}}\,\psi(\omega)}
{\sqrt{\sum_{\omega':\,X(\omega')=x}
\lvert\psi(\omega')\rvert^2}}.
\label{eq:selective_measurement_update}
\end{equation}
\end{axiom}

This axiom explains how when an actualized outcome is obtained through a measurement, the potentiality state is updated conditionally according to \eqref{eq:selective_measurement_update}. This update is of Bayesian form: one restricts the prior state to the event compatible with the observed outcome and renormalizes. The difference from ordinary probability theory is that the conditioning takes place at the level of complex potentialities rather than positive probabilities. 

In this formulation, unitary evolution and state reduction are not two heterogeneous mechanisms. They are two distinct modes of transformation of the same basic object, namely the potentiality state. This picture also clarifies the status of measurement. Measurement is not treated as the mysterious production of probabilities out of an already probabilistic object. Rather, we argue that it is the transition from a coherent potentiality structure to either a conditional potentiality branch (selective measurement) or a classical mixture of such branches (non-selective measurement, discussed in Section \ref{sec:mixtures}). The role of the Born rule is then to convert this potentiality-level description into empirical probabilities.

\begin{remark}
Axiom~4 describes the ideal limiting case in which the alternatives
associated with distinct outcomes have been completely resolved into
distinguishable records. In this limit, conditioning on the recorded
event is exact and gives the potentiality-theoretic counterpart of the
L\"uders update. Actualization will be used more generally, however, for
the physical formation of a record that distinguishes alternatives to
some degree. When the record states are only partially distinguishable,
actualization is correspondingly partial: some coherence is suppressed,
but residual coherence remains available for interference. This graded
notion is developed in Section~\ref{subsubsec:emergence_classicality}.
\end{remark}

\subsection{Selective and non-selective measurement}
\label{sec:nonselective}
 
The distinction between selective and non-selective measurement becomes especially transparent in this language.
 
A \emph{selective} measurement is one in which the realized measurement outcome remains available in an accessible physical record, including through an apparatus pointer, an ancilla, an environment fragment, or a memory device. Implicit in this analysis, as in the standard formalism, is a record-consistency principle: if a physical process has left an objective record in the world, then subsequent dynamics and state assignment must be consistent with the existence of that record. This requirement is similar to Belavkin's nondemolition, or quantum-causality, condition~\cite{Belavkin1994}. 

If the event \(X=x\) is observed and recorded in this broad sense, the relevant post-measurement state is the conditional potentiality state \(\psi_{\,|X=x}\). If instead a measurement of \(X\) has taken place, but the realized value is not retained, we then talk about a \emph{non-selective} measurement. In that case, the post-measurement description is not a single conditional potentiality state but a statistical mixture of branches:
\begin{equation}
\bigl\{(\mathbb P(X=x),\,\psi_{\,|X=x})\bigr\}_{x\in\mathcal X}.
\label{eq:non_selective_branch_ensemble}
\end{equation}
The point is that the branch has been physically singled out, even if the observer does not keep track of which branch it was. Interference between distinct alternatives \(x\) therefore disappears, and the state is no longer described by one coherent potentiality distribution but by an ensemble of conditional potentiality states. This is precisely the signature of a mixed state, in which the observable probability for \(Y\) is then
\begin{equation}
\mathbb P_{\mathrm{mix}}(Y=y)
=
\sum_x \mathbb P(X=x)\,\mathbb P_x(Y=y).
\label{eq:mixed_born_potentiality}
\end{equation}

Here the mixed state appears specifically as the product of non-selective measurement; its general definition, independent of how it arises, together with the potentiality-matrix representation \(\rho=\Psi\Psi^\dagger\), is developed in Section~\ref{sec:mixtures}.

\subsection{Contextual transformations}
\label{subsec:contextual_transformations}

The preceding material introduced the basic physical expression of the potentiality formalism for a pure state in a fixed context. One of the central lessons of the quantum formalism, however, is that the same physical state admits different contextual representations, associated with different observables or different experimental arrangements, and that these representations cannot in general be embedded into one single classical event structure. In the standard Hilbert-space language, this is expressed by the existence of different bases or, more generally, different spectral decompositions related by unitary transformations. In the present framework, the same idea is naturally formulated in terms of contextual spaces and complex transition kernels acting on potentiality distributions.

A physical state is thus not represented by one probability distribution over a single universal sample space of already actualized outcomes. Rather, it is represented by a set of contextual potentiality distributions, each attached to a specific space of possible outcomes associated with a given variable or measurement arrangement. A context should therefore be understood as the space in which a given variable is resolved into its possible outcomes, prior to actualization. Let \(A\) be such a context, with outcome space \(\Omega_A\). A state represented in that context is described by a normalized complex distribution
\begin{equation}
\psi^{(A)}:\Omega_A\to\mathbb C,
\qquad
\sum_{\omega_A\in\Omega_A} |\psi^{(A)}(\omega_A)|^2=1.
\label{eq:state_in_context_A}
\end{equation}
Likewise, another context \(B\), with outcome space \(\Omega_B\), carries another representation
\begin{equation}
\psi^{(B)}:\Omega_B\to\mathbb C,
\qquad
\sum_{\omega_B\in\Omega_B} |\psi^{(B)}(\omega_B)|^2=1.
\label{eq:state_in_context_B}
\end{equation}

The key point is that \(\psi^{(A)}\) and \(\psi^{(B)}\) need not be interpreted as two unrelated states but may be two contextual representations of the same physical state. The problem is then to describe how one passes from one contextual representation to another. The potentiality-theoretic analogue of a change of basis is a complex transition kernel between two contexts.

\begin{definition}[Contextual transformation]
Let \(\Omega_A\) and \(\Omega_B\) be two contextual spaces. A \emph{contextual transformation} from \(A\) to \(B\) is a complex kernel
\begin{equation}
L_{B\leftarrow A}(\omega_B\mid\omega_A)\in\mathbb C
\label{eq:contextual_kernel_definition}
\end{equation}
acting linearly on potentiality distributions according to
\begin{equation}
\psi^{(B)}(\omega_B)
=
\sum_{\omega_A\in\Omega_A}
L_{B\leftarrow A}(\omega_B\mid\omega_A)\,\psi^{(A)}(\omega_A).
\label{eq:contextual_kernel_action}
\end{equation}
\end{definition}

This is the direct analogue of the transformation
\[
|\psi\rangle \mapsto U|\psi\rangle
\]
in the Hilbert-space formalism, written instead at the level of contextual potentiality amplitudes. In general, the kernel \(L_{B\leftarrow A}\) is required to preserve normalization, that is, to be isometric in the sense of \eqref{eq:kernel_isometry}. Under this condition, if \(\psi^{(A)}\) is normalized, then \(\psi^{(B)}\) is normalized as well, so total potentiality norm is preserved exactly as unitary changes of representation preserve the norm of a state vector in the standard formalism:
\begin{equation}
\sum_{\omega_B} |\psi^{(B)}(\omega_B)|^2
=
\sum_{\omega_A} |\psi^{(A)}(\omega_A)|^2.
\label{eq:norm_preservation_context}
\end{equation}

The potentiality formalism also makes it possible to formulate clearly the
difference between compatible and incompatible variables. Two variables
\(X\) and \(Y\) are \emph{compatible} when there \emph{exists} a single
context \((\Omega,\mathcal A,\mathcal P)\) on which both can be defined as
variables, possibly coarse-grained ones. In that case, one may form a joint
potentiality distribution \(\psi(x,y)\), define marginal and conditional
potentiality distributions, and apply the law of total potentiality exactly
as in the previous section. This existence condition is precisely
equivalent to the commutation of the corresponding self-adjoint operators: by the spectral
theorem, two such operators commute if and only if they admit a common
eigenbasis, that is, if and only if they are functions on one and the same
context. In contrast, incompatibility describes a situation where no context
supports both descriptions at once. One may still pass from one context to
another through a transformation kernel of the form
\eqref{eq:contextual_kernel_action}, but one should not interpret the two
contexts as subsets of one larger classical sigma-algebra of resolved
outcomes. 

\section{Extension of the potentiality formalism to mixed states and composite systems}
\label{sec:extensions}

We now turn to two further extensions that are necessary for a fuller physical theory, namely the inclusion of mixed states and the description of composite systems.

\subsection{Mixed potentiality states and coherent composition}
\label{sec:mixtures}

Non-selective measurements and classical uncertainty over preparation
procedures can both give rise to mixed states. We first introduce them
independently of any composite-system representation. Their interpretation as
reduced descriptions of pure states on enlarged potentiality spaces will be
developed in Section~\ref{subsubsec:reduced_states}.

\subsubsection{Mixed states and potentiality matrices}

A pure state is represented, in a fixed context, by a single normalized
potentiality vector $\psi\in\mathbb C^n$. A mixed state instead represents
either classical uncertainty over several such states or the loss of access to
a branch label created by measurement or entanglement. It may be specified by
an ensemble
\begin{equation}
\bigl\{(q_r,\psi^{(r)})\bigr\}_{r\in\mathcal R},
\qquad
q_r\geq0,
\qquad
\sum_{r\in\mathcal R}q_r=1,
\label{eq:potentiality_statistical_ensemble}
\end{equation}
where each $\psi^{(r)}$ is Born-normalized. Its operational state is
\begin{equation}
\rho
=
\sum_{r\in\mathcal R}q_r
|\psi^{(r)}\rangle\langle\psi^{(r)}|.
\label{eq:mixed_potentiality_state}
\end{equation}
The branches therefore combine statistically rather than coherently. In a
non-selective measurement, $r$ labels the possible conditional branches; in a
preparation mixture, it labels classical uncertainty about the prepared
potentiality state.

Equivalently, define the potentiality matrix
\begin{equation}
\Psi_{ir}=\sqrt{q_r}\,\psi_i^{(r)},
\qquad
\rho=\Psi\Psi^\dagger.
\label{eq:rho_from_potentiality_matrix}
\end{equation}
The density matrix is the operational object obtained when the branch label is
not retained, whereas $\Psi$ retains a branch-resolved description relative to
a specified ensemble, record structure, or purification. The map
$\Psi\mapsto\Psi\Psi^\dagger$ is a matrix-valued analogue of the Born rule in that it removes branch resolution while preserving all coherence that can affect measurements on the system.

The factorization is not unique. The state is pure precisely when $\Psi$ has
rank one; otherwise different ensembles and different purifications may yield
the same operational density matrix. Thus the columns of $\Psi$ acquire a
physical branch-level meaning only when additional structure identifies a
particular preparation, record, monitoring scheme, or auxiliary subsystem.
Without such structure, only $\rho$ is operationally fixed.

In a fixed context, the density matrix may be read as a coherence kernel: its
diagonal entries are the Born weights, while its off-diagonal entries encode
relative-phase information that may affect future statistics after coherent
contextual transformations. If the off-diagonal block between two families of
alternatives vanishes, the reduced state is invariant under relative-phase
transformations between those families, and their contributions compose
incoherently.

\subsection{Extension of the potentiality formalism to composite systems and entanglement}
\label{sec:extension_composite} 

Having extended the formalism from pure states to mixed states, we now turn to composite systems. 

\subsubsection{Description of Composite Systems}

We first introduce the axiom that provides the description of a composite system.

\begin{axiom}[Composite systems]
A composite physical system is described by a factorization of the underlying potentiality space into subsystem spaces,
\begin{equation}
\Omega=\Omega_A\times\Omega_B,
\qquad
\omega=(\omega_A,\omega_B),
\label{eq:composite_axiom}
\end{equation}
where \(\omega_A\) and \(\omega_B\) describe distinguishable subsystems or distinguishable aspects of the world.
\end{axiom}

A global pure state is then represented by a normalized complex function
\begin{equation}
\psi(\omega_A,\omega_B).
\label{eq:global_potentiality_composite}
\end{equation}

\begin{definition}[Separable and entangled states]
The global potentiality state is said to be \emph{separable} if it factorizes as
\begin{equation}
\psi(\omega_A,\omega_B)=\alpha(\omega_A)\beta(\omega_B)
\label{eq:separable_potentiality}
\end{equation}
for some partial potentiality states \(\alpha\) and \(\beta\). Otherwise, the state is said to be \emph{entangled}.
\end{definition}

A separable state describes two autonomous potentiality structures, one for each subsystem. An entangled state describes a single irreducibly joint potentiality structure that cannot be decomposed into two independent components. In this sense, entanglement is not a mysterious correlation between already actual local outcomes; it is the non-factorizability of the underlying potentiality structure itself.

It is easy to see that potentiality independence is strictly stronger
than probabilistic independence of the Born distributions (see eq. \ref{eq:pot_implies_prob_indep}). In other words, as is well known, a joint state can fail to factorize at the
amplitude level, that is $\operatorname{rank}[\psi(x,y)]\geq 2$, so that
$\psi(x,y)\neq\alpha(x)\beta(y)$ for any choice of $\alpha$ and $\beta$, while
its Born distribution nevertheless factorizes in the given context,
$|\psi(x,y)|^2=p_X(x)\,p_Y(y)$ because relative phases can destroy the product structure of
the amplitudes without leaving any trace in their squared moduli. The
following theorem provides an exact characterization of the relationship between the two notions: potentiality independence
is probabilistic independence \emph{in all local contexts
simultaneously}. 

\begin{theorem}[Contextual invariance characterization]
\label{new:thm:invariance}
A \emph{local context pair} is a pair \((U,V)\) of unitary contextual
transformations acting on the \(A\) and \(B\) factors respectively, so
that the transformed joint potentiality state is
\begin{equation}
\psi'(x',y')
=
\sum_{x,y}
U(x'\mid x)\,
V(y'\mid y)\,
\psi(x,y).
\label{eq:local_context_transformation_joint}
\end{equation}
For a normalized joint pure potentiality state on finite spaces,
\(\psi\in\ell^2(\Omega_A\times\Omega_B)\), the following statements are
equivalent:
\begin{enumerate}
\item[\rm(i)] \(\psi\) is potentiality independent:
\begin{equation}
\psi(x,y)=\alpha(x)\beta(y)
\end{equation}
for some normalized potentiality distributions \(\alpha\) and \(\beta\);

\item[\rm(ii)] for every local context pair \((U,V)\), the Born
distribution of the transformed state factorizes into its incoherent
marginals:
\begin{equation}
|\psi'(x',y')|^2
=
p'_X(x')\,p'_Y(y'),
\end{equation}
where
\begin{equation}
p'_X(x')
=
\sum_{y'}|\psi'(x',y')|^2,
\qquad
p'_Y(y')
=
\sum_{x'}|\psi'(x',y')|^2;
\end{equation}

\item[\rm(iii)] the matrix \([\psi(x,y)]\) has rank one.
\end{enumerate}
\end{theorem}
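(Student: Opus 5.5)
The plan is to prove the cycle (i)$\Leftrightarrow$(iii), (i)$\Rightarrow$(ii), (ii)$\Rightarrow$(iii). The equivalence (i)$\Leftrightarrow$(iii) is essentially linear algebra. A rank-one complex matrix can be written $\psi=ab^{T}$ with $a,b\neq 0$. Normalizing gives $\alpha=a/\|a\|$ and $\beta=b\|a\|$. Since $\sum|\psi|^2=\|a\|^2\|b\|^2=1$, $\beta$ is then automatically Born-normalized. Conversely, an outer product $\alpha\beta^{T}$ with nonzero factors has rank exactly one.

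For (i)$\Rightarrow$(ii), observe that \eqref{eq:local_context_transformation_joint} is $\psi'=U\psi V^{T}$ in matrix form. A product state therefore maps to $\psi'(x',y')=(U\alpha)(x')\,(V\beta)(y')$. The factors $U\alpha$ and $V\beta$ remain normalized because $U$ and $V$ are unitary. The Born distribution then factorizes as in \eqref{eq:pot_implies_prob_indep}, and summing over one index shows the two factors are exactly the incoherent marginals $p'_X,p'_Y$.

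The substantive direction is (ii)$\Rightarrow$(iii). I would argue by contraposition: suppose $\operatorname{rank}\psi\geq 2$, and exhibit a single local context pair whose Born distribution fails to factorize. The first step uses the singular value decomposition (Schmidt decomposition), which supplies unitaries $U_0,V_0$ with $U_0\psi V_0^{T}=\mathrm{diag}(s_1,s_2,\dots)$ and $s_1\ge s_2>0$. Taking $(U,V)=(U_0,V_0)$, the transformed Born matrix is $|\psi'(x',y')|^2=s_{x'}^2\delta_{x'y'}$. Factorization would require $p'_X(1)p'_Y(2)=s_1^2s_2^2$ to equal $|\psi'(1,2)|^2=0$, which is impossible because $s_1,s_2>0$.

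So the Schmidt basis is already a witnessing local context, and no rotation away from it is needed. The step to handle carefully is only the bookkeeping: the convention $U(x'\mid x)$ acting as left multiplication must be matched with the SVD factors, and the rectangular case $|\Omega_A|\neq|\Omega_B|$ must be covered, where the diagonal is padded with zeros and the argument above still applies. Given these, the contrapositive closes the cycle and establishes the equivalence of (i), (ii) and (iii).
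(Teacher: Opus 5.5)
Your proposal is correct and follows the paper's proof essentially step for step. Both use elementary linear algebra for (i)$\Leftrightarrow$(iii), direct factorization of $\psi'=U\psi V^{T}$ for (i)$\Rightarrow$(ii), and a contrapositive in the Schmidt (singular-value) contexts, where $p'(1,2)=0$ while $p'_X(1)p'_Y(2)=s_1^2s_2^2>0$. Your explicit normalization of the factors and your matching of the SVD unitaries to the kernel convention (including the rectangular case) only make explicit details the paper leaves implicit.
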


\begin{proof}
The equivalence between {\rm (i)} and {\rm (iii)} is an elementary
consequence of linear algebra. A nonzero matrix has rank one if and only
if its entries factorize as
\begin{equation}
\psi(x,y)=\alpha(x)\beta(y).
\end{equation}
Since \(\psi\) is normalized, the factors may be chosen to be
Born-normalized.

To prove {\rm (i)}\(\Rightarrow\){\rm (ii)}, suppose that
\(\psi(x,y)=\alpha(x)\beta(y)\). Under a local context pair
\((U,V)\), one has
\begin{align}
\psi'(x',y')
&=
\sum_{x,y}
U(x'\mid x)V(y'\mid y)\alpha(x)\beta(y)
\\
&=
\left(
\sum_x U(x'\mid x)\alpha(x)
\right)
\left(
\sum_y V(y'\mid y)\beta(y)
\right).
\end{align}
Thus
\begin{equation}
\psi'(x',y')
=
\alpha'(x')\beta'(y'),
\end{equation}
where \(\alpha'=U\alpha\) and \(\beta'=V\beta\). Consequently,
\begin{equation}
|\psi'(x',y')|^2
=
|\alpha'(x')|^2|\beta'(y')|^2
=
p'_X(x')p'_Y(y').
\end{equation}

We prove {\rm (ii)}\(\Rightarrow\){\rm (i)} by contraposition. Suppose
that \(\psi\) is not a product state. Its Schmidt decomposition then
contains at least two nonzero coefficients:
\begin{equation}
\psi
=
\sum_k s_k\,e_k\otimes f_k,
\qquad
s_1\geq s_2>0,
\label{eq:schmidt_decomposition_invariance}
\end{equation}
where \(\{e_k\}\subset\ell^2(\Omega_A)\) and
\(\{f_k\}\subset\ell^2(\Omega_B)\) are orthonormal families.

Choose local contextual transformations \(U\) and \(V\) such that the
transformed contexts coincide with the corresponding Schmidt contexts.
In these contexts, the transformed potentiality distribution has
components
\begin{equation}
\psi'(k,\ell)
=
s_k\,\delta_{k\ell}.
\label{eq:schmidt_context_amplitudes}
\end{equation}
Its Born distribution is therefore
\begin{equation}
p'(k,\ell)
=
s_k^2\,\delta_{k\ell},
\end{equation}
while its incoherent marginals are
\begin{equation}
p'_X(k)=s_k^2,
\qquad
p'_Y(\ell)=s_\ell^2.
\end{equation}
In particular,
\begin{equation}
p'(1,2)=0,
\end{equation}
whereas
\begin{equation}
p'_X(1)p'_Y(2)
=
s_1^2s_2^2
>
0.
\end{equation}
The Born distribution therefore fails to factorize in this local
context pair. Hence {\rm (ii)} fails whenever \(\psi\) is not a product
state, which proves {\rm (ii)}\(\Rightarrow\){\rm (i)}.
\end{proof}

The mathematical content of the equivalence is standard and is related to Gisin's theorem, according to which every entangled pure state violates a Bell inequality~\cite{Gisin1991}. The contribution of the present theorem is to identify context-universal statistical independence as exactly the probability-level content of amplitude-level independence, a question that the standard formalism does not explicitly formulate because it does not introduce a specific name to the corresponding notion on the amplitude side.

\subsubsection{Mixed states as reduced descriptions of pure states}
\label{subsubsec:reduced_states}

The contextual invariance theorem characterizes when a joint potentiality state assigns autonomous potentiality states to its subsystems. When it does not, that is, for entangled states, the question arises of what the potentiality-level description of a single subsystem is. The answer cannot be the marginal amplitude \(\sum_y\psi(x,y)\) since the projection onto the first factor groups several joint alternatives under each value of \(x\), so the coherent marginal is not Born-normalized, and its square is not the outcome distribution of any measurement on \(A\).

To define the correct reduced object first define for each alternative \(y\in\Omega_B\),
\begin{equation}
\varphi_y:=\psi(\cdot,y)\in\mathbb C^{\Omega_A}
\label{eq:relative_state_subsystem}
\end{equation}
as the unnormalized relative potentiality state of \(A\) given \(y\), with Born weight
\begin{equation}
q_y
=
\sum_x|\psi(x,y)|^2.
\label{eq:relative_state_born_weight}
\end{equation}

Writing
\begin{equation}
\varphi_y
=
\sqrt{q_y}\,\psi^{(y)}
\label{eq:normalized_relative_potentiality_state}
\end{equation}
with \(\psi^{(y)}\) normalized whenever \(q_y>0\), the family \(\{(q_y,\psi^{(y)})\}_y\) is precisely a statistical ensemble of potentiality states in the sense of \eqref{eq:potentiality_statistical_ensemble}. The associated potentiality matrix is the joint amplitude itself, read as a matrix:
\begin{equation}
\Psi_A:=\bigl[\psi(x,y)\bigr]_{x\in\Omega_A,\;y\in\Omega_B},
\qquad
\rho_A=\Psi_A\Psi_A^\dagger.
\label{eq:reduced_potentiality_matrix}
\end{equation}
Equivalently, the entries of the reduced density matrix are
\begin{equation}
(\rho_A)_{xx'}
=
\sum_{y\in\Omega_B}
\psi(x,y)\,\overline{\psi(x',y)}.
\label{eq:partial_trace_components}
\end{equation}

Equation~\eqref{eq:partial_trace_components} is precisely the partial trace over \(B\), written directly in potentiality coordinates. The reduced state of a subsystem is thus a mixed state in the sense of Section~\ref{sec:mixtures}; the partial trace of the standard formalism is recovered as marginalization followed by incoherent aggregation, and requires no additional postulate.

Statistical mixtures and reduced mixed states have different physical origins. In a statistical mixture, the label \(y\) represents classical uncertainty about which potentiality state was prepared. In a reduced mixed state, by contrast, the complete joint potentiality state may be perfectly known and pure. The mixedness of \(A\) arises because the correlations between \(A\) and \(B\) have been omitted from the local description. Measurements performed on \(A\) alone cannot distinguish between these two origins when they give rise to the same density matrix. Operationally, both are represented by the same \(\rho_A\).

Conversely, it is well known that every density matrix can be represented as the reduced state of a pure potentiality state on a sufficiently large auxiliary space. Let \(\rho_A\) be any positive semidefinite matrix satisfying
\begin{equation}
\Tr(\rho_A)=1.
\end{equation}
It admits a factorization
\begin{equation}
\rho_A
=
\Psi_A\Psi_A^\dagger
\label{eq:general_density_matrix_factorization}
\end{equation}
for some matrix \(\Psi_A\). For example, if
\begin{equation}
\rho_A
=
V\Lambda V^\dagger
\label{eq:density_matrix_spectral_decomposition}
\end{equation}
is a spectral decomposition, where \(\Lambda\) is diagonal with nonnegative entries, one may take
\begin{equation}
\Psi_A
=
V\Lambda^{1/2}.
\label{eq:canonical_purification_matrix}
\end{equation}
The entries \((\Psi_A)_{xr}\) may then be interpreted as a normalized joint potentiality distribution on an enlarged space \(\Omega_A\times\Omega_R\), since
\begin{equation}
\sum_{x,r}|(\Psi_A)_{xr}|^2
=
\Tr(\Psi_A\Psi_A^\dagger)
=
\Tr(\rho_A)
=
1.
\label{eq:purification_normalization}
\end{equation}
Applying the reduction rule \eqref{eq:partial_trace_components} recovers the original density matrix. Mixed states may therefore always be regarded as reduced descriptions of pure potentiality states on an enlarged space, although neither the purification nor the corresponding branch decomposition is unique. These statements are, of course, not new; they represent the potentiality formulation of well-known results from the standard formalism (see
the Hughston--Jozsa--Wootters theorem \cite{HJW1993}). 

\subsubsection{Emergence of classicality: decoherence and incoherent aggregation}
\label{subsubsec:emergence_classicality}

The potentiality formalism also provides a natural language for the emergence of classical behavior, with decoherence being driven by the system becoming entangled with an environment whose records are not retained. Consider a system with outcome space \(\Omega_S=\{x_1,\dots,x_n\}\) coupled to an environment \(E\), so that the composite potentiality space factorizes as \(\Omega=\Omega_S\times\Omega_E\). Suppose the interaction correlates each system alternative \(x_i\) with a normalized environmental potentiality state \(\chi_i\), so that an initially coherent system state with amplitudes \(\psi_i\) evolves into the entangled joint state
\begin{equation}
\Psi(x_i,\omega_E)
=
\psi_i\,\chi_i(\omega_E),
\qquad
\sum_{i=1}^n|\psi_i|^2=1.
\label{eq:system_environment_entangled}
\end{equation}
By Section~\ref{subsubsec:reduced_states}, the reduced description of the system is the potentiality matrix \(\Psi_S=[\psi_i\,\chi_i(\omega_E)]\), whose observable content, \((\rho_S)_{ij}=\psi_i\overline{\psi_j}\,\gamma_{ij}\), depends on the environmental branches only through their overlaps,
\begin{equation}
\gamma_{ij}
:=
\sum_{\omega_E}\chi_i(\omega_E)\,\overline{\chi_j(\omega_E)},
\qquad
\gamma_{ii}=1.
\label{eq:record_overlaps}
\end{equation}

The Hermitian matrix \(\gamma\) interpolates between two extremes: \(\gamma_{ij}\equiv1\) when the environment does not resolve the system alternatives, all \(\chi_i\) being equal, and
\begin{equation}
\gamma_{ij}=\delta_{ij}
\label{eq:orthogonal_records}
\end{equation}
when the environmental records are orthogonal, hence distinguishable in principle. The probability of a measured outcome for a recombined system context \(U(u\mid x_i)\) extended to the environment is
\begin{equation}
P(u)
=
\sum_i |U(u\mid x_i)|^2\,|\psi_i|^2
+
2\operatorname{Re}
\sum_{i<j}
U(u\mid x_i)\,\overline{U(u\mid x_j)}\,
\psi_i\overline{\psi_j}\,\gamma_{ij}.
\label{eq:decoherence_interference_split}
\end{equation}

The interference term is linear in the record overlaps and one may distinguish two polar cases. On the one hand, for \(\gamma_{ij}\equiv1\) the two terms can be grouped into \(\bigl|\sum_i U(u\mid x_i)\psi_i\bigr|^2\), and the system behaves as a coherent pure state. On the other hand, for \(\gamma_{ij}=\delta_{ij}\), the interference term vanishes and
\begin{equation}
P(u)
=
\sum_i |\psi_i|^2\,|U(u\mid x_i)|^2,
\label{eq:decohered_probabilities}
\end{equation}
which is exactly the non-selective measurement formula \eqref{eq:mixed_born_potentiality} with branch weights \(\mathbb P(x_i)=|\psi_i|^2\). An environment carrying orthogonal branch records therefore induces, for the reduced system, the same operational state as a non-selective measurement. Decoherence can thus be regarded as the dynamical passage from the first regime to the second, driven by the growth of distinguishability of the environmental branches, that is, by the decay of \(\gamma_{ij}\) towards \(\delta_{ij}\).

The overlap \(|\gamma_{ij}|\) gives a precise meaning to \emph{partial
actualization}. When the correlated records are orthogonal, \(\gamma_{ij}=0\), the
off-diagonal block of the reduced description vanishes, the alternatives are fully resolved
into distinguishable facts, and the actualization is complete and, once the records are
redundant and stable, effectively irreversible. When the records are only partially
distinguishable, \(0<|\gamma_{ij}|<1\), a corresponding part of the coherence has been removed while the residual coherence
\(|\psi_i\psi_j\gamma_{ij}|\) is still able to interfere. The alternatives are then neither
fully actualized nor fully unresolved. In this sense, the actualization process is a continuum controlled by the distinguishability of the record, and it reduces to complete
actualization only in the orthogonal limit. 

This partial notion of actualization coincides with the distinction between strong and weak measurement. A
strong, projective measurement forms an orthogonal record, \(\gamma_{ij}=0\), and actualizes
completely; a weak-coupling measurement forms an imperfectly distinguishable record,
\(|\gamma_{ij}|\to1\), and actualizes only partially, extracting little information and leaving
most of the coherence intact. In this sense, strong and weak measurements are endpoints of the single graded actualization process. 

\section{Ontological and Physical Implications}
\label{sec:implications}

We now discuss two kinds of
implications of the potentiality formalism. The first concerns the possible ontological reading of the
quantum state. The second concerns possible physical and mathematical
extensions.

\subsection{Ontological and conceptual value}

The potentiality formulation clarifies the possible ontological reading
of the quantum state. In the standard formalism, the quantum state is
typically introduced as a vector or density operator whose empirical role is
to generate probabilities, thus implicitly suggesting that the state is
merely a computational tool or an encoding of knowledge. The potentiality
formulation suggests a different interpretation, where the quantum state describes an
objective structure of unresolved physical potentialities, from which
actual outcomes emerge through contextual resolution. This interpretation is naturally close to the Aristotelian distinction between
potentiality and actuality. What is physically real encompasses not only what is already actual but also, according to quantum mechanics, the structure of what may become actual, represented by complex-valued potentiality distributions whose phases determine how
unresolved alternatives recombine before the Born rule is applied. One could argue that the presence of observable interferences is the physical witness that potentialities are a fundamental aspect of the description of the physical world. If what has observable consequences is naturally regarded as real, a
criterion dating back to Plato's Eleatic Stranger for whom the defining condition of being is the power to act or be acted upon \cite{PlatoSophist}, then
potentialities qualify since relative phases among unresolved alternatives affect
observable outcome statistics in ways that no single-context probability distribution can encode. At the same time, the formalism does not force this ontological
interpretation. One may also read it instrumentally, as a clearer reformulation of the standard quantum rules. The main point of this paper is that the
notion of potentiality is not merely metaphorical. It can be given
precise mathematical content through complex-valued measures,
Born-normalized potentiality distributions, conditioning rules, contextual transformations, and the distinction
between coherent transport and incoherent aggregation. In other words, quantum mechanics does not compel us to say that potentialities are real, but it gives a precise mathematical meaning to the idea that reality may include more than what is already actual.

It should be emphasized that the potentiality framework is not a hidden-variable theory in
the classical sense. What is real prior to measurement is not a list of
pre-existing values for all observables, but a structured space of
possible actualizations. Actuality is thus not the primitive level of
the theory. It is the emerging result of a process in which a context is selected,
a record is formed, and the potentiality structure is conditioned and
partially decohered. This interpretation helps distinguish different forms of realism. Bell-type violations exclude locally causal, measurement-independent models in which the relevant outcomes are fixed by pre-existing local variables~\cite{Bell1964,CHSH1969}. They do not exclude every form of realism, including explicitly nonlocal realist theories, and they do not by themselves rule out realism at the level of potentialities. In the present language, entangled systems are not systems endowed with pre-existing local values; they are systems whose joint potentiality state does not factorize into autonomous local potentiality states. The framework may therefore support a form of potentiality realism.

On a different note, we argue that the potentiality formulation may also have some conceptual and pedagogical value.
It begins with notions familiar from ordinary probability theory, such
as distributions, conditioning, independence, mixtures, and transition
kernels, but transports them to a pre-probabilistic level. In classical
probability theory, these structures describe alternatives organized within a Boolean event space. In quantum mechanics, the corresponding structures apply instead at the
level of complex-valued potentialities before the Born rule turns them
into observable probabilities. In this context, the Hilbert space representation is naturally reinterpreted as the canonical mathematical geometry of complex
potentialities. 

It is useful to close this section by drawing the correspondence
between the standard Hilbert-space vocabulary and the potentiality vocabulary developed
in this paper. Table~\ref{tab:dictionary} does so in two blocks. The first block lists
the standard notions together with their potentiality-level counterparts. The second block lists notions that are primitive and natural at the
potentiality level but possess no named counterpart in the standard formalism. The upper block shows that every structural ingredient of the standard formalism admits an explicit representation in
the potentiality language, which is the content of the empirical equivalence claimed
throughout this paper. On the other hand, the lower block shows that the reverse
translation is not complete since the potentiality language contains named, rule-governed operations
whose standard counterparts are implicit steps of calculation or, in the case
of the independence gap, questions the formalism does not pose. The
asymmetry of this correspondence dictionary supports the claim that the present proposal is a reformulation rather than a mere relabeling that would lead to a bijective dictionary. 

\ifarxiv
\begin{table*}[t]
\else
\begin{table}[ht]
\fi
\caption{Dictionary between the standard formalism and the theory of potentiality.
The upper block is a faithful translation; in the lower block the standard column is
empty: the operations are performed in standard calculations but are not named objects
of the formalism.}
\label{tab:dictionary}
\ifarxiv
\begin{tabular}{@{}p{0.44\textwidth}p{0.48\textwidth}@{}}
\else
\begin{tabular*}{\textwidth}{@{\extracolsep{\fill}}p{0.44\textwidth}p{0.48\textwidth}@{}}
\fi
\toprule
Standard formalism & Theory of potentiality \\
\midrule
Unit vector (ray) $|\psi\rangle\in\mathcal H$ &
Potentiality state $\psi:\Omega\to\mathbb C$,
$\sum_\omega|\psi(\omega)|^2=1$ (Axiom~1) \\[3pt]
Orthonormal basis $\{|\omega\rangle\}$ &
Context: space $\Omega$ of atomic alternatives \\[3pt]
Nondegenerate observable $\hat A=\sum_x x\,|x\rangle\langle x|$ &
Fine-grained variable $X:\Omega\to\mathcal X$ injective; relabeled state $\psi_X$
(Axiom~2) \\[3pt]
Projected vector $\Pi_A|\psi\rangle$ (degenerate outcome) &
Relative potentiality state $\mathbf 1_A\,\psi$ of a non-atomic event \\[3pt]
Born rule $p(x)=|\langle x|\psi\rangle|^2$ &
$p_X(x)=|\psi_X(x)|^2$; coarse-grained case by incoherent aggregation
\eqref{eq:incoherent_pushforward} \\[3pt]
L\"uders update $\Pi_A|\psi\rangle/\|\Pi_A|\psi\rangle\|$ &
Physical conditioning
\eqref{eq:posterior_potentiality_density_born_normalized} \\[3pt]
Unitary evolution; change of basis $U$ &
Isometric kernel transport
\eqref{eq:potentiality_kernel_transport}, \eqref{eq:kernel_isometry} \\[3pt]
Composition of propagators $U(t_2,t_0)=U(t_2,t_1)U(t_1,t_0)$ &
Divisibility of transition potentialities \eqref{eq:potentiality_divisibility} \\[3pt]
Density matrix $\rho$ &
$\rho=\Psi\Psi^\dagger$ for a potentiality matrix $\Psi$
\eqref{eq:rho_from_potentiality_matrix} \\[3pt]
Partial trace $\rho_A=\operatorname{Tr}_B\,|\psi\rangle\langle\psi|$ &
Reduced potentiality matrix $\Psi_A=[\psi(x,y)]$,
$\rho_A=\Psi_A\Psi_A^\dagger$ \eqref{eq:reduced_potentiality_matrix} \\[3pt]
Entangled pure state (Schmidt rank $\geq2$) &
Non-factorizable joint state, $\operatorname{rank}[\psi(x,y)]\geq2$
(Theorem~\ref{new:thm:invariance}) \\[3pt]
Environmental decoherence &
Record overlaps $\gamma_{ij}$ \eqref{eq:record_overlaps} \\
\midrule
\multicolumn{2}{@{}l}{\emph{Potentiality-level notions with no named standard
counterpart}}\\
\midrule
--- (amplitudes are summed over a subspace in calculations, but the sum is not an
object of the formalism) &
Coherent aggregate $\mathcal P(A)=\sum_{\omega\in A}\psi(\omega)$
\eqref{eq:complex_measure_density}: the transport interface of an event \\[3pt]
--- (only the recorded case is named, as the L\"uders rule) &
Algebraic conditional $\psi_{\rm alg}(y\mid x)=\psi(x,y)/\psi_X(x)$
\eqref{eq:algebraic_conditional_potentiality}: conditioning appropriate to an
\emph{unrecorded} intermediate \\[3pt]
--- (a resolution of the identity inserted in a propagator, with no probabilistic
name) &
Law of total potentiality \eqref{eq:law_total_potentiality} as kernel composition \\[3pt]
--- (the fact is expressible --- an entangled state with accidentally
factorizing statistics in one basis, cf.\ the Remark following
Theorem~\ref{new:thm:invariance} --- but carries no independence-theoretic
name) &
The independence gap: Born statistics may factorize in a given context while
$\operatorname{rank}[\psi(x,y)]\geq2$; context-universal factorization, or an informationally complete reconstruction, certifies product structure
(Theorem~\ref{new:thm:invariance}) \\[3pt]
--- (all decompositions of $\rho$ are identified) &
Branch structure of a mixture: the columns of $\Psi$ as potentiality branches
relative to a specified preparation, record, or remote context
(Sections~\ref{sec:mixtures} and~\ref{subsubsec:reduced_states}) \\
\botrule
\ifarxiv
\end{tabular}
\else
\end{tabular*}
\fi
\ifarxiv
\end{table*}
\else
\end{table}
\fi

\subsection{Possible physical and mathematical extensions}

Although the framework developed here is empirically equivalent to
standard quantum mechanics, it also suggests several directions for
further development. 

A first direction concerns stochastic and measurement-conditioned dynamics. Nelson's stochastic mechanics attempts to recover quantum
mechanics from an underlying diffusive dynamics
\cite{Nelson1966,Nelson1985}, but faces well-known difficulties with
repeated measurements and multi-time correlations
\cite{BlanchardGolinServa1986}. The potentiality formulation suggests a
different perspective in which stochasticity is placed at the level of the
potentiality state itself and measurement records generate conditional
updating, or filtering, under continuous observation. Developing such a
stochastic extension, and determining the conditions under which it reduces
to standard quantum trajectories, is left for future work. 

A second direction concerns time. If measurement is conditioning on
actualized records, the accumulation of records is itself a natural
candidate for an internal chronology. This direction is developed in
companion work \cite{MartelliniTime2026}.

Finally, the framework raises representation-theoretic questions. In
the standard formulation, Hilbert space is introduced as a primitive
mathematical description. In the potentiality formulation, one may instead ask
whether Hilbert space can be derived as the canonical representation
space of coherent, square-normalized, complex-valued potentiality
assignments. Gleason's theorem already suggests that probability
assignments on the quantum lattice are represented by density operators
\cite{Gleason1957}. A natural extension of the present program is to
characterize which coherence conditions on complex potentiality
assignments force the usual Hilbert-space formalism, and which
relaxations might lead to structures beyond standard quantum mechanics.

\section{Conclusion}
\label{sec:conclusion}

This paper has proposed a reformulation of quantum mechanics as a theory
of unresolved physical potentialities. From the mathematical standpoint, many of the structures usually associated
with classical probability theory are not abandoned by quantum
mechanics, but merely displaced to a deeper level. Additivity,
conditioning, independence, mixtures, transition kernels, and temporal
divisibility all have natural analogues at the level of potentialities. 

Within this framework, the standard elements of quantum mechanics
receive a unified interpretation. Pure states are coherent
potentiality distributions. Mixed states are statistical mixtures of
potentiality distributions, or equivalently potentiality matrices whose
observable content is a density matrix. The density matrix can in turn
be read as a coherence kernel: its off-diagonal blocks encode the retained
capacity of relative phase to affect future statistics, while their
vanishing yields phase invariance and incoherent composition in the
reduced description. Measurement is conditioning of potentiality on an
actualized record. Decoherence is the suppression of coherent
recombination through environmental records. Entanglement is the non-factorization of joint potentialities.

Our central conclusion is therefore that quantum mechanics can be read as a
theory in which the linear and additive structure of uncertainty
survives, not at the level of probabilities over actual outcomes, but at
the level of unresolved complex potentialities. It thereby identifies more explicitly where the characteristic
nonclassical features of quantum theory enter, while preserving the
standard formalism and all of its empirical predictions. The resulting formulation is empirically equivalent to standard quantum
mechanics. Its value is to make explicit a mathematical and conceptual layer
that is not explicit in the usual Hilbert-space representation, which is reinterpreted as the canonical
geometry of square-normalized complex potentialities. 

An open
program is to clarify whether coherent square-normalized complex
potentiality assignments, together with suitable consistency conditions
across contexts, force the Hilbert-space formalism as their canonical
representation. If this program could be carried out, Hilbert space would no longer appear as a primitive postulate, but as the
representation space naturally associated with a theory of complex
potentialities. 

\section*{Statements and Declarations}

\ifarxiv
  \paragraph{Funding.} The author received no specific funding for this work.

  \paragraph{Competing interests.} The author declares no competing interests.

  \paragraph{Data availability.} Data sharing is not applicable to this article as no datasets were generated or analyzed during the current study.

  \bibliographystyle{apsrev4-2}
  \bibliography{referencesP1}
\else
  \bmhead{Funding}
  The author received no specific funding for this work.

  \bmhead{Competing interests}
  The author declares no competing interests.

  \bmhead{Data availability}
  Data sharing is not applicable to this article as no datasets were generated or analyzed during the current study.

  \bibliography{referencesP1}
\fi

\end{document}